\newcommand{\ddt}[1]{\frac{\mathrm{d}#1}{\mathrm{d}t}}
\newcommand{\be}{\begin{equation}}
\newcommand{\ee}{\end{equation}}
\newcommand{\ba}{\begin{equation} \begin{aligned}}
\newcommand{\ea}{\end{aligned} \end{equation}}
\newtheorem{mydef}{Definition}
\newtheorem{thm}{Theorem}
\newtheorem{lem}{Lemma}
\title{\Large \bf Non-Markovian stochastic
epidemics in\\ extremely heterogeneous populations}
\author{T.~House\footnote{Warwick Mathematics Institute, University of Warwick,
Coventry, CV4 7AL, UK.}}
\begin{document}

\maketitle

\begin{abstract}
A feature often observed in epidemiological networks is significant
heterogeneity in degree. A popular modelling approach to this has been to
consider large populations with highly heterogeneous discrete contact rates.
This paper defines an individual-level non-Markovian stochastic process that
converges on standard ODE models of such populations in the appropriate
asymptotic limit. A generalised Sellke construction is derived for this model,
and this is then used to consider final outcomes in the case where
heterogeneity follows a truncated Zipf distribution.
\end{abstract}

\setcounter{equation}{0}
\section{Introduction}

A wide variety of different mathematical structures are used to aid modern
infectious disease epidemiology, with particular effort expended on capturing
the heterogeneities in contact patterns that influence the spread of
pathogens~\cite{Anderson:1991,Keeling:2007}. One of the most active topics in
mathematical epidemiology is the use of network theory to represent contact
structure~\cite{Bansal:2007,Danon:2011}.

In the influential work by Barab\'{a}si and Albert~\cite{Barabasi:1999}, it was
shown that networks generated with preferential attachment to highly connected
nodes exhibit power-law behaviour: the probability that a node has $k$ links
becomes proportional to $k^{\alpha}$ for large $k$. In the pure
Barab\'{a}si-Albert (BA) model, $\alpha=-3$, although generalisations of this
model allow for $-3<\alpha<-2$~\cite{Durrett:2007}. Such distributions have
divergent second moment, but finite mean.

Work on synthetic populations~\cite{Eubank:2004}, sexual
behaviour~\cite{Schneeberger:2004} and contact surveys~\cite{Danon:2012} does
provide evidence for power-law behaviour; although concerns remain about the
statistical problems associated with looking for power laws in empirical
data~\cite{Clauset:2009}.

A crucial quantity in infectious disease epidemiology is the basic reproductive
ratio, $R_0$, which defines an epidemic threshold such that for $R_0>1$ the
disease is able to infect an appreciable proportion of a large
population~\cite{Diekmann:1990}.  The fact that $R_0$ is proportional to the
second moment of the contact network degree distribution has been known by
mathematical epidemiologists for some time, and is included in the standard
textbooks~\cite{Diekmann:2000}.  Just over a decade ago, work by
Pastor-Satorras and Vespignani noted that the divergence of the second moment
of the degree distribution might lead to divergence of $R_0$ and the absence of
an epidemic threshold~\cite{Pastor-Satorras:2001}, although it was quickly
pointed out by May and Lloyd that this need not apply for finite
systems~\cite{May:2001}.  Nevertheless, it is still clear that epidemics on
finite networks generated using the BA model have very different behaviour from
epidemics in homogeneously mixing
populations~\cite{Pastor-Satorras:2002,Berger:2005}, and interest persists in
the effects of extreme heterogeneity on epidemic
behaviour~\cite{Kiss:2006,Pastor-Satorras:2010}.

This paper defines a non-Markovian stochastic process for an epidemic in a
finite population with a discrete-valued individual-level measure of
heterogeneity in mixing. The Markovian version of this model has as its
large-population limit the ODE models used in the `annealed network'
approximation for heterogeneous networks~\cite{Pastor-Satorras:2010}. A
generalised Sellke construction is derived for this process to enable fast
sampling from the final size distribution. This is used to derive a
visualisation of the asymptotic absence of a threshold seen in highly
heterogeneous models. Note that the proofs presented are intended to be clear
rather than excessively detailed, and so do presuppose some familiarity with
standard results in probability and stochastic processes~\cite{Grimmett:2001}.

\vspace*{0.5cm}
\setcounter{equation}{0}
\section{Methods}

\subsection{A heterogeneous epidemic model}

Consider a population made up of $N$ individuals, indexed by $i,j,\ldots\in
\mathbb{N}$.  Each individual has an integer-valued random variable called the
\textit{degree}, $K_i\sim D$, where $D$ is called the \textit{degree
distribution}, representing its level of mixing with the rest of the
population. We will write $d_k$ for the probability that a random variable with
distribution $D$ takes the value $k$.  To each individual is assigned a
non-negative real-valued infectious period $T_i\sim\Gamma$, where $\Gamma$ is
called the \textit{infectious period distribution}. To each individual is
assigned a random variable $X_i(t)$ representing its disease state, which is
either susceptible $S$, infective $I$, or removed $R$.  Contacts between
individuals $i$ and $j$ happen at the points of a Poisson process with rate
$\tau K_i K_j$.  If a susceptible and infective individual contact each other,
then the susceptible becomes infective. An individual $i$ that becomes
infective at time $t$ is removed at time $t+T_i$~.

\vspace*{0.25cm}
\begin{mydef}
A stochastic process satisfying the description above is called a \emph{Discrete
Heterogeneous Stochastic Epidemic (DHSE)}.
\end{mydef}

Note that to specify a DHSE fully requires two distributions, $D$ and $\Gamma$,
two parameters, $N$ and $\tau$, and a set of of initial states for the
random variables, $\{X_i(0)\}$.

\subsection{Deterministic limit}

\subsubsection{Dynamics}

Consider a continuous-time Markov chain with non-independent integer-valued
random variables representing the numbers of the population susceptible and
infectious of different degree, $\{S_k(t), I_k(t)\}_{k\in \mathcal{K} \subseteq
\mathbb{N}}$.  There are two types of event and rate:
\begin{eqnarray}
(S_k,I_k) & \rightarrow & (S_k-1,I_k+1) \text{ at rate } \tau k S_k
 \sum_{l\in \mathcal{K}} l I_l \text{ ,}
\label{ktr} \\
I_k & \rightarrow & I_k-1 \text{ at rate } \gamma I_k \text{ .}
\label{kre}
\end{eqnarray}
We have $S_k(t) + I_k(t) \leq N_k , \forall t$, and can write $\mathbf{N} =
(N_k)$ for the vector of numbers of individuals of degree $k$. At time $t=0$,
we pick $\mathbf{N} \sim \mathrm{Multi}(\mathbf{d},N)$ where $\mathbf{d}=(d_k)$
and `Multi' is used to stand for the multinomial distribution probability
distribution function. This constrains the permissible choice of initial
conditions $\{S_k(0), I_k(0)\}$.

\vspace*{0.25cm}
\begin{mydef}
A stochastic process satisfying the description above is called a \emph{Discrete
Heterogeneous Markovian Epidemic (DHME)}.
\end{mydef}

Note that to specify a DHME fully requires the distribution $D$, two
parameters, $N$ and $\tau$, and a set of rules for generation of initial states
for the random variables $\{S_k(0), I_k(0)\}$ given $\mathbf{N}$.

\vspace*{0.25cm}
\begin{thm}
A DHSE where $\Gamma = \mathrm{Exp}(\gamma)$, i.e.\ the infectious periods are
exponentially distributed with mean $1/\gamma$, and where $m$ individuals are
selected at random to be initial infectives (with other individuals initially
susceptible) is equivalent to a DHME with the same $D$, $N$, and $\tau$.
\end{thm}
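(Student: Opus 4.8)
The plan is to show that the degree-stratified counts $\{S_k(t), I_k(t)\}$ extracted from a DHSE form a continuous-time Markov chain whose generator is exactly the one specified by the DHME events (\ref{ktr})--(\ref{kre}), and whose initial law coincides with the prescribed DHME initialisation. First I would deal with the degrees: since each $K_i \sim D$ is assigned independently, the vector $\mathbf{N} = (N_k)$ counting individuals of each degree is, by the defining construction of the multinomial distribution, distributed as $\mathrm{Multi}(\mathbf{d}, N)$, matching the DHME assumption.

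The key structural observation is that the exponential infectious period is memoryless. Consequently the residual infectious period of any individual infected before time $t$ is itself $\mathrm{Exp}(\gamma)$-distributed and independent of its elapsed infection age, so the full microstate process $\{X_i(t)\}$ is Markovian without any need to track infection ages. I would then verify that this microstate chain is strongly lumpable with respect to the partition induced by the aggregate counts $\{S_k, I_k\}$, i.e.\ that the transition rates out of any configuration depend on these counts alone. For removals, each of the $I_k$ degree-$k$ infectives is independently removed at rate $\gamma$, giving aggregate rate $\gamma I_k$, as in (\ref{kre}). For infections, a susceptible $i$ with $K_i = k$ is contacted by a given infective $j$ at Poisson rate $\tau k K_j$; by superposition of independent Poisson processes, $i$ becomes infected at rate $\tau k \sum_{j \text{ inf.}} K_j = \tau k \sum_{l} l I_l$, and summing over the $S_k$ susceptibles of degree $k$ yields total rate $\tau k S_k \sum_{l} l I_l$, matching (\ref{ktr}). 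Because these rates are functions of the aggregates alone, lumpability holds and the projected process $\{S_k(t), I_k(t)\}$ is itself a continuous-time Markov chain with precisely the DHME generator.

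Finally I would match the initial conditions: the DHSE rule of selecting $m$ individuals uniformly at random as infectives, together with the independent degree assignment above, induces a joint law on $\{S_k(0), I_k(0)\}$ given $\mathbf{N}$; adopting this same rule as the DHME initialisation makes the two initial distributions identical by construction. Equivalence of the two processes then follows, since a continuous-time Markov chain is determined by its generator and its initial law.

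I expect the main obstacle to be the lumpability step: specifically, the careful justification that discarding the infection-age coordinate (via memorylessness) genuinely preserves the Markov property, and that the aggregated transition rates depend on $\{S_k, I_k\}$ alone rather than on the identities of particular individuals. The underlying rate computations via Poisson superposition are otherwise routine.
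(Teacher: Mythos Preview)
Your proposal is correct and follows essentially the same route as the paper: define the aggregate counts, invoke memorylessness of the exponential infectious period to justify the Markov property, and verify that the lumped transition rates coincide with (\ref{ktr})--(\ref{kre}), together with the multinomial structure of $\mathbf{N}$. Your version is in fact more careful than the paper's, which performs the same ``lumping'' but leaves the lumpability and memorylessness justifications largely implicit; the paper also names the induced initial law on $\mathbf{I}(0)$ explicitly as multivariate hypergeometric, which you may wish to mention.
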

\begin{proof}
Start with a DHSE, and define the lumped variables
\be
S_k(t) := \sum_i \mathbf{1}_{\{X_i(t) = S\ \&\ K_i = k\}} \text{ ,} \qquad
I_k(t) := \sum_i \mathbf{1}_{\{X_i(t) = I\ \&\ K_i = k\}} \text{ .}
\label{XYk}
\ee
The multinomial distribution of $\mathbf{N}$ follows from the definition of the
DHSE and the independence of the $\{K_i\}$. Letting $\mathbf{S} = (S_k)$,
$\mathbf{I} = (I_k)$, if $m$ individuals are selected at random to be the
initial infectives then in the lumped variables this corresponds to a
multivariate hypergeometric distribution $\mathbf{I}(0) \sim
\mathrm{MultiHG}(\mathbf{N},m)$.

In the DHSE we have the rate of transmission given by a Poisson process,
meaning
\be
\mathrm{Pr}(X_i(t+\delta t) = I \; | \; X_i(t) = S) =
- \tau K_i \sum_j K_j \mathbf{1}_{\{X_j(t) = I\}}\delta t
+ o(\delta t) \text{ ,}
\ee
and~\eqref{ktr} is simply a lumping of this equation. In the same way,
exponential distribution of recovery times is equivalent to a death process
on infectives, meaning that
\be
\mathrm{Pr}(X_i(t+\delta t) = R \; | \; X_i(t) = I) = 
\gamma \delta t + o(\delta t) \text{ ,}
\ee
and~\eqref{kre} is a lumping of this process.
\end{proof}

\begin{thm}
\label{odelim}
A DHME with finite $\mathcal{K}$ and $D$ independent of $N$ has its
deterministic large $N$ limit given by an ODE system of the form
\be
\ddt{s_k} = -\beta k s_k \sum_l l \iota_l
\text{ ,} \qquad
\ddt{\iota_k} = \beta k s_k \sum_l l \iota_l - \gamma \iota_k \text{ .}
\label{mfodes}
\ee
\end{thm}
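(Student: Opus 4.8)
The plan is to recognise the DHME as a density-dependent Markov population process and to apply the associated functional law of large numbers (Kurtz's theorem). First I would introduce the scaled occupation densities $s_k := S_k/N$ and $\iota_k := I_k/N$, collected into a state vector $x = (s_k,\iota_k)_{k\in\mathcal{K}}$ that lives in a fixed, finite-dimensional, compact feasible region; finiteness of $\mathcal{K}$ is precisely what makes this region finite-dimensional. The infection rate $\tau k S_k \sum_l l I_l$ is quadratic in the population counts, so in density variables it equals $\tau N^2 k s_k \sum_l l \iota_l$ and scales as $N^2$. A non-degenerate limit therefore forces the standard rescaling $\tau = \beta/N$, with $\beta$ held fixed as $N\to\infty$. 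Under this scaling each of the two event types in \eqref{ktr}--\eqref{kre} has the density-dependent form demanded by Kurtz: a transition with jump vector $v$ fires at rate $N\lambda_v(x)$, where the per-capita rates $\lambda_v$ are polynomial in $x$ and hence locally Lipschitz on the compact feasible region.

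Next I would assemble the drift by summing each jump vector against its per-capita rate, obtaining
\[
F(x) = \Bigl(\, -\beta k s_k \sum_l l \iota_l \, , \;\; \beta k s_k \sum_l l \iota_l - \gamma \iota_k \,\Bigr)_{k\in\mathcal{K}} \,,
\]
which is exactly the right-hand side of \eqref{mfodes}. Local Lipschitz continuity of $F$, together with forward invariance of the compact feasible region, guarantees existence and uniqueness of the solution to $\dot{x} = F(x)$ for each admissible initial datum. Kurtz's theorem then yields that, provided the scaled initial state converges in probability to a deterministic $x(0)$, the scaled process converges to this solution uniformly on compact time intervals, in probability. This is the sense in which \eqref{mfodes} is the deterministic large-$N$ limit.

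It remains to handle the initial condition. Since $\mathbf{N}\sim\mathrm{Multi}(\mathbf{d},N)$ and $D$ is independent of $N$, the law of large numbers gives $N_k/N \to d_k$ in probability, so the admissible state region concentrates on the deterministic simplex $\{\, s_k + \iota_k \le d_k \,\}$. Writing the initial infectives through the multivariate hypergeometric rule, one checks that $\iota_k(0)$ converges to a deterministic limit (vanishing when $m/N\to 0$, or a fixed fraction of $d_k$ when $m/N$ tends to a positive constant), whence $s_k(0)\to d_k-\iota_k(0)$; feeding these limiting data into the theorem completes the argument. I expect the main obstacle to be the bookkeeping around the rate scaling: confirming that $\tau=\beta/N$ is the correct normalisation that turns the intrinsically quadratic, class-coupling infection rate into a genuine density-dependent rate of order $N$, and verifying that the resulting bilinear per-capita rates remain Lipschitz on the random-but-concentrating feasible region. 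Once this is in place, the remaining hypotheses of Kurtz's theorem are routine to check.
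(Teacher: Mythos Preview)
Your proposal is correct and follows essentially the same route as the paper: rewrite the transition rates of the DHME in density-dependent form via the rescaling $\beta=\tau N$, identify the drift as the right-hand side of~\eqref{mfodes}, and invoke Kurtz's theorem. The paper's own proof is terser---it simply displays the rates in the form $N\cdot(\text{function of } S_k/N,\,I_l/N)$ and cites Kurtz without spelling out the Lipschitz/compactness verifications or the initial-condition argument---so your more explicit checking of hypotheses and your treatment of the multinomial/hypergeometric initial data add rigour rather than constitute a different approach.
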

\begin{proof}
Note that the rates~\eqref{ktr} and~\eqref{kre} can be written in the form
\begin{eqnarray}
(S_k,I_k) & \rightarrow & (S_k-1,I_k+1) \text{ at rate } (\tau N) N k
\left[ \frac{S_k}{N} \right]
 \sum_{l\in \mathcal{K}} l \left[ \frac{I_l}{N} \right] \text{ ,} \\
I_k & \rightarrow & I_k-1 \text{ at rate } \gamma N 
\left[ \frac{I_k}{N} \right] \text{ .}
\end{eqnarray}
Defining 
\be
s_k(t) := \mathbb{E}\left[ \frac{S_k}{N} \right] \text{ ,} \qquad
\iota_k(t) := \mathbb{E}\left[ \frac{I_k}{N}  \right] \text{ ,} \qquad
\beta := \tau N \text{ ,} \label{xyk}
\ee
we can then apply the results of Kurtz~\cite{Kurtz:1970,Kurtz:1971} to derive
the deterministic limit~\eqref{mfodes} with corrections appearing at
$O(N^{-1/2})$.
\end{proof}

\subsubsection{Early behaviour}

Analysis of~\eqref{mfodes} using dynamical systems theory shows that
perturbations away from the disease-free equilibrium $(\mathbf{s},
\boldsymbol{\iota}) = (\mathbf{d}, \mathbf{0})$, grow exponentially
with rate
\be
r = \beta {\mathbb{E}_D[K^2]} - \gamma
 \text{ ,} \label{req}
\ee
provided $r$ is positive.  The argument made by Pastor-Satorras and
Vespignani~\cite{Pastor-Satorras:2001} is then essentially that if the second
moment of $D$ is divergent, then~\eqref{req} implies that an arbitrarily small
$\beta$ can still lead to a growing epidemic.

Note that equations like~\eqref{mfodes} have been used for some time in
epidemiological modelling~\cite{May:1988}.  In a more general sense this system
represents a special, discrete version of the proportionate mixing analysed in
the non-Markovian case by~\cite{Diekmann:2000} and so using this analysis the
quantity
\be
\rho := \beta {\mathbb{E}_D[K^2]\mathbb{E}_{\Gamma}[T]}
\label{rhodef}
\ee
will be an appropriate definition for the basic reproductive ratio $R_0$ of the
large-$N$ DHSE: we expect a large epidemic exactly when $\rho>1$. We will make
use of the definition~\eqref{rhodef} in the finite case where this is not a
`threshold' but remains an appropriate quantity to use when comparing different
parameter values.

\subsubsection{Late behaviour}

Following the broad approach of~\cite[Appendix B]{Kiss:2006}, it is possible to
manipulate~\eqref{mfodes} in the limit where the initial level of infection is
very small and obtain the expression
\be
s_k^\infty := \lim_{t\rightarrow\infty}
s_k(t) = d_k \left( \mathrm{exp}\left( - \frac{\beta}{\gamma} \sum_l l \left(
d_l - s_l^\infty \right) \right) \right)^k \text{ .}
\ee
Then the expected final proportion of individuals in the susceptible class at
the end of the epidemic is
\be
s^\infty = G(\psi) \text{ ,}
\qquad \text{for} \quad
\psi = \mathrm{exp}\left( - \frac{\beta}{\gamma} \left(
\mathbb{E}_D[K] - \psi G'(\psi) \right) \right) \text{ ,} \label{psimark}
\ee
where $G$ is the probability generating function of $D$.  Looking at this
transcendental equation, solutions for which $\psi<1$ will only exist if $r>0$
for $r$ as in~\eqref{req}. Further consideration of these equations leads us to
expect a change in behaviour around $\rho=1$, if such a change exists, to be
percolation-like.

\subsection{A generalised Sellke construction}

Consider a population made up of $N$ individuals, indexed by $i,j,\ldots\in
\mathbb{N}$.  Each individual has an integer-valued random variable $K_i\sim
D$.  Each individual also picks a non-negative real-valued infectious period
$T_i\sim\Gamma$, and has a random variable $X_i(t)\in \{S, I, R\}$ representing
its disease state.  Each initially susceptible individual picks a
\textit{resistance} $Q_i \sim \mathrm{Exp}(K_i)$, with initial infectives
having zero resistance. The \textit{infectious pressure} of the epidemic at
time $t$ is defined as
\be
\Lambda(t) := \tau \int_0^t \sum_{j=1}^{N}
K_j \mathbf{1}_{\{X_j(u) = I\}} \;du \text{ .}
\label{ladef}
\ee
An initially susceptible individual becomes infectious at the first time $t$
when $Q_i < \Lambda(t)$.

\vspace*{0.25cm}
\begin{mydef}
A stochastic process satisfying the description above is called a \emph{Discrete
Heterogeneous Sellke Construction (DHSC)}.
\end{mydef}

Note that to specify a DHSC fully requires the same quantities as the DHSE: two
distributions, $D$ and $\Gamma$, and two parameters, $N$ and $\tau$, and a set
of initial states for the random variables, $\{X_i(0)\}$.

\vspace*{0.25cm}
\begin{lem}
\label{sellem}
A DHSE and DHSC with the same $D$, $\Gamma$, $N$, $\tau$ and $\{X_i(0)\}$ are
equivalent.
\end{lem}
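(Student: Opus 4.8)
The plan is to couple the two processes on a common probability space built from the shared draws of the degrees $\{K_i\}$, the infectious periods $\{T_i\}$ and the initial states $\{X_i(0)\}$, and to show that the resulting infection- and removal-time sequences have identical joint laws. Since in both constructions an individual infected at time $t$ is removed deterministically at $t+T_i$, the removal events are a fixed function of the infection times and the shared $\{T_i\}$; it therefore suffices to prove that the infection times agree in distribution.

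The key observation is a comparison of hazards. In a DHSE, between consecutive events the infective set $I$ is fixed, so a susceptible $i$ receives infectious contacts at the constant rate $\tau K_i \sum_{j\in I} K_j = K_i\,\ddt{\Lambda}$, where $\Lambda$ is the infectious pressure of~\eqref{ladef}. Hence the total rate of the next infection is $\tau\big(\sum_{i\in S}K_i\big)\big(\sum_{j\in I}K_j\big)$, the time to the next infection is exponential with this rate, and the newly infected individual is $i$ with probability $K_i/\sum_{l\in S}K_l$. In a DHSC, the pressure $\Lambda$ rises at the constant rate $\tau\sum_{j\in I}K_j$ between events, and by the memorylessness of the exponential each still-susceptible $i$ carries a residual resistance $Q_i-\Lambda\sim\mathrm{Exp}(K_i)$, independent across $i$. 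The next infection occurs when $\Lambda$ first overtakes the smallest residual resistance; since the minimum of independent $\mathrm{Exp}(K_i)$ variables is $\mathrm{Exp}\big(\sum_{i\in S}K_i\big)$ and is attained by $i$ with probability $K_i/\sum_{l\in S}K_l$, dividing the required pressure increment by the rate $\tau\sum_{j\in I}K_j$ reproduces exactly the same exponential waiting time and the same infection probabilities as in the DHSE.

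With this step in hand, the equivalence follows by induction on the (almost surely finite, as $N<\infty$) ordered sequence of events. Conditional on the history up to the current event---equivalently, on the current configuration $(\mathbf S,\mathbf I)$ together with the elapsed times since infection that are needed to schedule pending removals---the joint distribution of the next event (its waiting time, whether it is an infection or a removal, and the identity of the individual involved) is identical in the two models: infection events match by the hazard computation above, and removal events match because they are determined by the already-coupled infection times and the shared $\{T_i\}$. Iterating over all events shows that the two processes have the same finite-dimensional distributions, and hence the same law.

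The step I expect to be the main obstacle is the correct interleaving of infections and removals in the induction. A removal changes the infective set $I$ and hence the rate at which $\Lambda$ accumulates, so one must verify that conditioning on \emph{no infection having yet occurred} at the moment of a removal does not distort the exponential structure of the residual resistances. This is exactly where memorylessness is indispensable: it guarantees that each surviving susceptible's residual resistance remains $\mathrm{Exp}(K_i)$ after any removal event, so the inductive hypothesis is preserved and the competition between the exponential infection clock and the deterministic removal schedule is resolved consistently in both constructions.
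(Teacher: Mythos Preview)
Your argument is correct and rests on the same idea as the paper's proof: the memorylessness of the exponential resistances makes the conditional probability that a susceptible $i$ remains uninfected over $[t,t+\delta t]$ equal to $1-\tau K_i\sum_j K_j\mathbf{1}_{\{X_j(t)=I\}}\,\delta t + o(\delta t)$, which is exactly the survival probability under the Poisson contact process in the DHSE. The paper states this infinitesimal hazard match directly and stops there; you reach the same conclusion via the competing-exponentials formulation (minimum of independent $\mathrm{Exp}(K_i)$ variables) together with an explicit induction over the ordered event sequence, and you additionally verify that removals---which change the slope of $\Lambda$---do not disturb the residual-resistance distributions. Your route is more detailed and makes the coupling and the infection/removal interleaving explicit, which the paper leaves to the reader, but the underlying mechanism is identical.
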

\begin{proof}
From the conditions stated, the only difference between the DHSE and DHSC is in
the behaviour of transmission. We can follow the broad exposition
in~\cite{Andersson:2000}. First, note that in the DHSC
\be
Q_i \sim \mathrm{Exp}(K_i) \qquad \Rightarrow \qquad
\mathrm{Pr}(Q_i > q) = {\rm e}^{-K_i q} \text{ .}
\ee
Then over a time period $[t, t+\delta t]$ we know that 
\ba
\mathrm{Pr}(Q_i > \Lambda(t+\delta{}t) | Q_i > \Lambda(t)) & =
\mathrm{exp}(-(\Lambda(t+\delta{}t) - \Lambda(t))K_i)\\
& = 1 - \tau K_i \sum_j K_j \mathbf{1}_{\{X_j(t) = I\}}
+ o(\delta{}t) \text{ .}
\ea
In the DHSE, we know from the behaviour of Poisson processes that 
\be
\mathrm{Pr}(X_i(t+\delta{}t)=S | X_i(t)=S) 
= 1 - \tau K_i \sum_j K_j \mathbf{1}_{\{X_j(t) = I\}}
+ o(\delta{}t) \text{ .}
\ee
Therefore the behaviour of transmission in the two models is equivalent.
\end{proof}
\begin{thm}
\label{fsthm}
If Z is the final number of removed individuals in a DHSE
\be
Z := \lim_{t\rightarrow\infty} \sum_{j=1}^N
\mathbf{1}_{\{X_j(t) = R\}}\text{ ,}
\ee
then by picking degrees, resistances and recovery times as in the DHSC, and
defining the indexing such that $Q_{i}\leq Q_{i+1} , \forall i$ ,
\be
Z = \mathrm{min} \bigg\{ i \;\bigg|\; Q_{i+1} > 
\tau \sum_{j=1}^{i} K_j T_j \bigg\} \label{zfs} \text{ .}
\ee
\end{thm}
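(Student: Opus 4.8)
The plan is to work entirely within the DHSC, which is legitimate by Lemma~\ref{sellem}, and to exploit two structural features of the infectious pressure~\eqref{ladef}: it is continuous and non-decreasing in $t$ (being the integral of a non-negative integrand), and each individual $j$ that is ever infectious contributes exactly $\tau K_j T_j$ to its limiting value, since $j$ adds $K_j$ to the summand for precisely the duration $T_j$. Consequently, if $A$ denotes the set of individuals that are ever infected, then $\Lambda(\infty) := \lim_{t\to\infty}\Lambda(t) = \tau\sum_{j\in A} K_j T_j$, and the final number of removed individuals equals $|A|$ because every infective recovers after a finite time.

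First I would establish that infections occur in non-decreasing order of resistance. Since individual $i$ is infected precisely when $\Lambda(t)$ first exceeds $Q_i$, and $\Lambda$ is non-decreasing, any $i'$ with $Q_{i'}\le Q_i$ is infected no later than $i$; with the indexing $Q_i\le Q_{i+1}$ this forces the infected set to be an initial segment $\{1,\dots,W\}$, where $W$ is the final size. Moreover, because $\Lambda$ is non-decreasing with limit $\Lambda(\infty)$, individual $i$ is infected if and only if $Q_i<\Lambda(\infty)$, which reduces the timing question ``does the pressure ever reach $Q_i$'' to the static comparison with $\Lambda(\infty)$. Writing $P_i := \tau\sum_{j=1}^i K_j T_j$ (so $P_0=0$), the initial infectives have $Q=0$ and hence sit at the bottom of the ordering, seeding the argument below.

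Next I would pin down $W$ as $Z := \min\{i : Q_{i+1} > P_i\}$. For the lower bound I argue by induction that $1,\dots,Z$ are all infected: the initial infectives start the induction, and if $1,\dots,i$ are infected for some $i<Z$, then $\Lambda(\infty)\ge P_i\ge Q_{i+1}$ (the last inequality because $i<Z$), so by the reduction above individual $i+1$ is infected as well. For the matching upper bound I would show that $Z+1$ escapes: were it infected there would be a first time $t^\ast$ with $\Lambda(t^\ast)=Q_{Z+1}$, and at that instant the only previously infected individuals are those with strictly smaller resistance, a subset of $\{1,\dots,Z\}$, whose total lifetime contribution to the pressure is at most $P_Z$. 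Hence $\Lambda(t^\ast)\le P_Z < Q_{Z+1}$, a contradiction; by the ordering, no individual beyond $Z$ is infected either, so $W=Z$ and~\eqref{zfs} follows.

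The step I expect to be the main obstacle is this last timing argument. The delicate point is that the pressure capable of infecting individual $Z+1$ at any finite time can be supplied only by individuals already infected, and is therefore bounded by $P_Z$ irrespective of the order or timing of recoveries; it is exactly the monotonicity and cumulative nature of $\Lambda$ that rules out any ``stalling then reviving'' pathology. I would also note that the boundary cases (ties among the $Q_i$, or the equality $Q_{i+1}=P_i$) occur with probability zero once $\Gamma$ is continuous, so the strict inequalities hold almost surely and the stated identity is exact.
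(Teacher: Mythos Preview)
Your argument is correct and follows the same route as the paper: invoke Lemma~\ref{sellem}, observe that each ever-infected individual $j$ contributes exactly $\tau K_j T_j$ to $\Lambda(\infty)$, and use the monotone ordering of resistances to identify the final size as the first index at which the next resistance exceeds the cumulative pressure from those already infected. Your treatment is considerably more detailed than the paper's sketch, supplying explicit lower- and upper-bound arguments; the only minor slip is that the almost-sure absence of ties comes from the continuity of the exponential resistances $Q_i$, not from continuity of $\Gamma$.
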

\begin{proof}
From Lemma~\ref{sellem}, the DHSE is equivalent to the DHSC. Permutation of the
indices can be performed without loss of generality since there is no
dependence of $D$ or $\Gamma$ on $i$. Consideration of the
integral~\eqref{ladef} gives
\be
\int_0^{\infty} \mathbf{1}_{\{X_i(u) = I\}} \;du =
\begin{cases} T_i & \text{ if } i \text{ is ever infected,}\\
 0 & \text{ if } i \text{ is never infected.}\end{cases}
\ee
Then from the choice of indexing,
\be
Q_{i+1} > \tau \sum_{j=1}^{i} K_j T_j =: \Lambda_\infty
\qquad \Rightarrow \qquad
Q_{\jmath'>i} > \Lambda_\infty 
= \lim_{t\rightarrow \infty} \Lambda(t) \text{ .}
\ee
This means that the first individual to resist infection will determine
the final size as in~\eqref{zfs}.
\end{proof}

\vspace*{0.5cm}
\setcounter{equation}{0}
\section{Results and Discussion}

The major practical benefit to Theorem~\ref{fsthm} is the enhanced numerical
performance of the Sellke construction compared to direct
simulation~\cite{House:2013}.  We use this to simulate the final size of a DHSE
where $D$ is a truncated Zipf distribution, i.e.
\be
d_k = \kappa_\alpha k^{\alpha} \text{ ,}
\qquad \text{for} \quad \kappa = \sum_{k=1}^{k_{\mathrm{max}}} k^{\alpha}
\text{ .}
\ee
In the limit $\alpha\rightarrow-\infty$, this reduces the heterogeneous model
to the standard `mass action' epidemic where each individual has $K_i =1$. But for
smaller $|\alpha|$, the $\lceil1-\alpha\rceil{}$-th moment will grow with
$k_{\mathrm{max}}$ rather than converging.  For $-3< \alpha \leq -2$, i.e.\ a
second moment that diverges for large $k_{\mathrm{max}}$, this would therefore
be expected not to exhibit threshold behaviour.

The DHSC can be used to sample efficiently from the final size distribution of
the epidemic. Figure~\ref{fig:fs} shows the results of this simulation, for
constant and exponential $\Gamma$ as well as $\alpha = -2$, $-3$, $-4$,
$-\infty$, when the natural choice $k_{\mathrm{max}} = N-1$ is made. This
visualises how low values of $|\alpha|$ are associated with less
critical-looking behaviour at $\rho=1$. Note that this is not in contradiction
with Theorem~\ref{odelim}, and the analysis of the deterministic limit
threshold, which requires finite $\mathcal{K}$. In the current context this
means that $k_{\mathrm{max}} \ll N$, which does not hold in the simulations and
highlights the problems of reliance on ODE methods alone.

In conclusion, it is possible to provide a mathematically transparent model
of extreme discrete heterogeneity, which allows fast sampling from the
final size distribution to allow insights into the subtle epidemiological
phenomena associated with extreme heterogeneity in mixing.

\vspace*{0.5cm}
\section*{Acknowledgements}

Work funded by the UK Engineering and Physical Sciences Research Council. 

\clearpage

\begin{figure}[H]
\begin{center}
{\resizebox{\textwidth}{!}{ \includegraphics{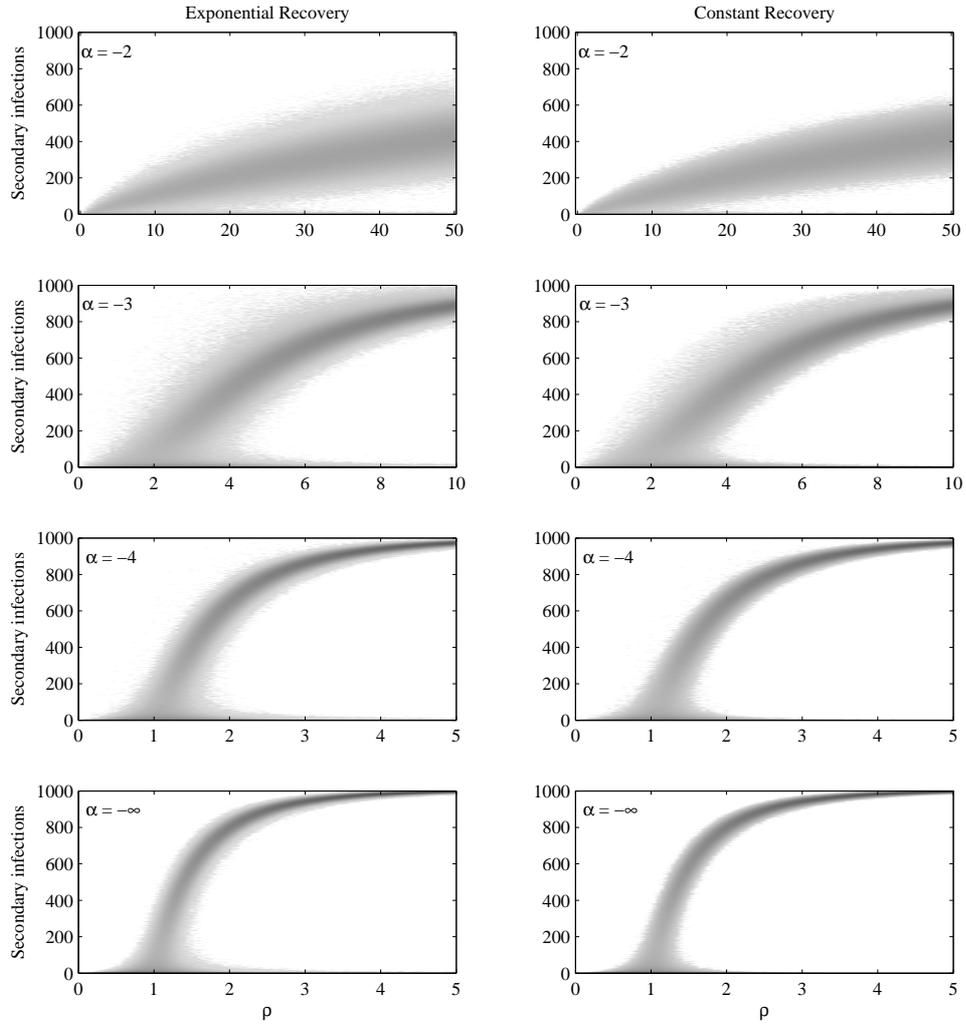} }}
\end{center}
\caption{Numerical results for the final size distribution for exponential mean
1 and constant unit infectious period, truncated Zipf degree distribution, a
total population size $N=10^3$, $m=1$ initial infective, and $10^5$ realisations
for each value of $\alpha$, $\rho$, and distribution. Colour intensity
is proportional to probability${}^{1/6}$}
\label{fig:fs}
\end{figure}

\clearpage

\clearpage

\appendix

\section{Code for the DH Sellke construction}

The following MATLAB function accepts a vector of degree numbers \texttt{K}, a
number of initial infectives \texttt{m}, a transmission rate \texttt{tau},
and a recovery rate \texttt{gamma}. If the final parameter is set to
\texttt{NaN}, a fixed unit recovery time is used. A sample from the final size
distribution is returned.\\

\begin{verbatim}
function Z = het_sel(K,m,tau,gamma)
  N = length(K);
  if ~isnan(gamma)
    T = exprnd(1/gamma,1,N);
  else
    T = ones(1,N);
  end
  [Q, ii] = sort(exprnd(1./K((m+1):N)));
  La0 = sum(tau*T(1:m).*K(1:m));
  La = [La0, La0+cumsum(tau*T(m+ii).*K(m+ii))];
  z = find(Q > La(1:(end-1)), 1);
  if isempty(z)
    Z = N;
  else
    Z = z+m-1;
  end
end
\end{verbatim}


\begin{thebibliography}{99}
\bibitem{Anderson:1991}
R.~M. Anderson and R.~M. May.
\newblock {\em Infectious Diseases of Humans}.
\newblock Oxford University Press, Oxford, 1991.

\bibitem{Andersson:2000}
H.~Andersson and T.~Britton.
\newblock {\em Stochastic Epidemic Models and Their Statistical Analysis},
  volume 151 of {\em Springer Lectures Notes in Statistics}.
\newblock Springer, Berlin, 2000.

\bibitem{Bansal:2007}
S.~Bansal, B.~T. Grenfell, and L.~A. Meyers.
\newblock {When individual behaviour matters: homogeneous and network models in
  epidemiology.}
\newblock {\em Journal of the Royal Society Interface}, 4(16):879--91, 2007.

\bibitem{Barabasi:1999}
A.~L. Barab\'{a}si and R.~Albert.
\newblock Emergence of scaling in random networks.
\newblock {\em Science}, 286(5439):509--512, 1999.

\bibitem{Berger:2005}
N.~Berger, C.~Borgs, J.~T. Chayes, and A.~Saberi.
\newblock On the spread of viruses on the internet.
\newblock In {\em Proceedings of the 16th Symposium on Discrete Algorithms},
  2005.

\bibitem{Pastor-Satorras:2010}
C.~Castellano and R.~Pastor-Satorras.
\newblock Thresholds for epidemic spreading in networks.
\newblock {\em Physical Review Letters}, 105(21):218701, 2010.

\bibitem{Clauset:2009}
A.~Clauset, C.~R. Shalizi, and M.~E.~J. Newman.
\newblock Power-law distributions in empirical data.
\newblock {\em SIAM Review}, 51(4):661--703, 2009.

\bibitem{Danon:2011}
L.~Danon, A.~P. Ford, T.~House, C.~P. Jewell, M.~J. Keeling, G.~O. Roberts,
  J.~V. Ross, and M.~C. Vernon.
\newblock Networks and the epidemiology of infectious disease.
\newblock {\em Interdisciplinary Perspectives on Infectious Diseases},
  2011:1--28, 2011.

\bibitem{Danon:2012}
L.~Danon, T.~House, M.~J. Keeling, and J.~M. Read.
\newblock Social encounter networks: collective properties and disease
  transmission.
\newblock To appear in Journal of the Royal Society Interface, 2012.

\bibitem{Diekmann:2000}
O.~Diekmann and J.~A.~P. {Heesterbeek}.
\newblock {\em {Mathematical epidemiology of infectious diseases: Model
  building, analysis and interpretation}}.
\newblock John Wiley \& Sons Ltd., 2000.

\bibitem{Diekmann:1990}
O.~Diekmann, J.~A.~P. Heesterbeek, and J.~A.~J. Metz.
\newblock On the definition and the computation of the basic reproduction ratio
  {$R_0$} in models for infectious diseases in heterogeneous populations.
\newblock {\em Journal of Mathematical Biology}, 28(4):365--382, Jan 1990.

\bibitem{Durrett:2007}
R.~Durrett.
\newblock {\em Random Graph Dynamics}.
\newblock Cambridge University Press, 2007.

\bibitem{Eubank:2004}
S.~Eubank, H.~Guclu, V.~S.~A. Kumar, M.~V. Marathe, A.~Srinivasan,
  Z.~Toroczkai, and N.~Wang.
\newblock Modelling disease outbreaks in realistic urban social networks.
\newblock {\em Nature}, 429(6988):180--184, 2004.

\bibitem{Grimmett:2001}
G.~R. Grimmett and D.~R. Stirzaker.
\newblock {\em {Probability and Random Processes}}.
\newblock Oxford University Press, 3rd edition, 2001.

\bibitem{House:2013}
T.~House, J.~V. Ross, and D.~Sirl.
\newblock How big is an outbreak likely to be? {{M}}ethods for epidemic
  final-size calculation.
\newblock {\em Proceedings of the Royal Society A}, 469(2150), 2013.

\bibitem{Keeling:2007}
M.~J. Keeling and P.~Rohani.
\newblock {\em Modeling Infectious Diseases in Humans and Animals}.
\newblock Princeton University Press, New Jersey, 2007.

\bibitem{Kiss:2006}
I.~Z. Kiss, D.~M. Green, and R.~R. Kao.
\newblock {The effect of contact heterogeneity and multiple routes of
  transmission on final epidemic size.}
\newblock {\em Mathematical Biosciences}, 203(1):124--36, 2006.

\bibitem{Kurtz:1970}
T.~G. Kurtz.
\newblock Solutions of ordinary differential equations as limits of pure jump
  {{M}}arkov processes.
\newblock {\em Journal of Applied Probability}, 7(1):49--58, 1970.

\bibitem{Kurtz:1971}
T.~G. Kurtz.
\newblock Limit theorems for sequences of jump {{M}}arkov processes
  approximating ordinary differential processes.
\newblock {\em Journal of Applied Probability}, 8(2):344--356, 1971.

\bibitem{May:1988}
R.~M. May and R.~M. Anderson.
\newblock The transmission dynamics of human immunodeficiency virus {{(HIV)}}.
\newblock {\em Philosophical Transactions of the Royal Society of London Series
  B}, 321(1207):565--607, 1988.

\bibitem{May:2001}
R.~M. May and A.~L. Lloyd.
\newblock Infection dynamics on scale-free networks.
\newblock {\em Physical Review E}, 64(066112), 2001.

\bibitem{Pastor-Satorras:2001}
R.~Pastor-Satorras and A.~Vespignani.
\newblock Epidemic dynamics and endemic states in complex networks.
\newblock {\em Physical Review E}, 63(066117), 2001.

\bibitem{Pastor-Satorras:2002}
R.~Pastor-Satorras and A.~Vespignani.
\newblock Epidemic dynamics in finite size scale-free networks.
\newblock {\em Physical Review E}, 65(3):035108, 2002.

\bibitem{Schneeberger:2004}
A.~Schneeberger, C.~H. Mercer, S.~A.~J. Gregson, N.~M. Ferguson, C.~A.
  Nyamukapa, R.~M. Anderson, A.~M. Johnson, and G.~P. Garnett.
\newblock Scale-free networks and sexually transmitted diseases: a description
  of observed patterns of sexual contacts in {{Britain and Zimbabwe}}.
\newblock {\em Sexually Transmitted Diseases}, 31(6):380--7, 2004.
\end{thebibliography}
\end{document}